\newcommand{\Pset}{\mathcal{P}}
\newcommand{\R}{\mathbb{R}}
\newcommand{\F}{\mathcal{F}}
\newtheorem{proposition}{Proposition}
\newtheorem{definition}{Definition}
\newtheorem{theorem}{Theorem}
\newcommand{\N}{\mathbb{N}}
\newcommand{\X}{\mathcal{X}}
\newcommand{\Y}{\mathcal{Y}}
\begin{document}

\title{Pareto-depth for Multiple-query Image Retrieval}

\author{Ko-Jen Hsiao, Jeff Calder, \IEEEmembership{Member,~IEEE},
	and~Alfred O.~Hero III, \IEEEmembership{Fellow,~IEEE}
\thanks{This work was partially supported by ARO grant W911NF-09-1-0310. The paper is submitted to IEEE Transaction on Image Processing on February 20, 2014.
K.-J. Hsiao and A. Hero are with the Department of Electrical Engineering and Computer Science, University of Michigan, Ann Arbor. Jeff Calder is with the Department of Mathematics, University of Michigan, Ann Arbor. (email: \{coolmark,hero,jcalder\}@umich.edu.)}}

\maketitle

\begin{abstract}
  Most content-based image retrieval systems consider either one single query, or multiple queries that include the same object or represent the same semantic information. In this paper we consider the content-based image retrieval problem for multiple query images corresponding to \emph{different} image semantics.  We propose a novel multiple-query information retrieval algorithm that combines the Pareto front method (PFM) with efficient manifold ranking (EMR).  We show that our proposed algorithm outperforms state of the art multiple-query retrieval algorithms on real-world image databases.  We attribute this performance improvement to concavity properties of the Pareto fronts, and prove a theoretical result that characterizes the asymptotic concavity of the fronts. \end{abstract}
\begin{IEEEkeywords}
Pareto fronts, information retrieval, multiple-query retrieval, manifold ranking. 
\end{IEEEkeywords}

\section{Introduction}
In the past two decades content-based image retrieval (CBIR) has become an important problem in machine learning and information retrieval~\cite{frome2007image,gia2004instance,vasconcelos1999learning}.  
Several image retrieval systems for multiple queries have been proposed in the literature~\cite{belkin1993effect,arandjelovic2012multiple,jin2005improving}. In most systems, each query image corresponds to the same image semantic concept, but may possibly have a different background, be shot from an alternative angle, or contain a different object in the same class.  The idea is that by utilizing multiple queries of the same object, the performance of single-query retrieval can be improved. We will call this type of multiple-query retrieval \emph{single-semantic-multiple-query} retrieval. Many of the techniques for single-semantic-multiple-query retrieval involve combining the low-level features from the query images to generate a single averaged query~\cite{arandjelovic2012multiple}.

In this paper we consider the more challenging problem of finding images that are relevant to multiple queries that represent different image semantics. In this case, the goal is to find images containing relevant features from \emph{each and every} query.  Since the queries correspond to different semantics,  desirable images will contain features from several distinct images, and will not necessarily be closely related to any individual  query.  This makes the problem fundamentally different from single query retrieval, and from single-semantic-multiple-query retrieval.    In this case, the query images will not have similar low level features, and forming an averaged query is not as useful.

Since relevant images do not necessarily have features closely aligned with any particular query, many of the standard retrieval techniques are not useful in this context.  For example, bag-of-words type approaches, which may seem natural for this problem, require the target image to be closely related to several of the queries.  Another common technique is to input each query one at a time and average the resulting similarities.  This tends to produce images closely related to one of the queries, but rarely related to all at once.   Many other multiple-query retrieval algorithms are designed specifically for the single-semantic-multiple-query problem~\cite{arandjelovic2012multiple}, and again tend to find images related to only one, or a few, of the queries.  

Multiple-query retrieval is related to the metasearch problem in computer science.  In metasearch, the problem is to combine search results for the same query across multiple search engines.  This is similar to the single-semantic-multiple-query problem in the sense that every search engine is issuing the same query (or semantic). Thus, metasearch algorithms are not suitable in the context of multiple-query retrieval with several distinct semantics. 

In this paper, we propose a novel algorithm for multiple-query image retrieval that combines the Pareto front method (PFM) with efficient manifold ranking (EMR).  The first step in our PFM algorithm is to issue each query individually and rank all samples in the database based on their dissimilarities to the query. Several methods for computing representations of images, like SIFT and HoG, have been proposed in the computer vision literature, and any of these can be used to compute the image dissimilarities. Since it is very computationally intensive to compute the dissimilarities for every sample-query pair in large databases, we use a fast ranking algorithm called Efficient Manifold Ranking (EMR) \cite{xu2011} to compute the ranking without the need to consider all sample-query pairs.  EMR can efficiently discover the underlying geometry of the given database and significantly reduces the computational time of traditional manifold ranking. Since EMR has been successfully applied to single query image retrieval, it is the natural ranking algorithm to consider for the multiple-query problem.

The next step in our PFM algorithm is to use the ranking produced by EMR to create Pareto points, which correspond to dissimilarities between a sample and every query. Sets of Pareto-optimal points, called \emph{Pareto fronts}, are then computed. The first Pareto front (depth one) is the set of non-dominated points, and it is often called the \emph{Skyline} in the database community. The second Pareto front (depth two) is obtained by removing the first Pareto front, and finding the non-dominated points among the remaining samples. This procedure continues until the computed Pareto fronts contain enough samples to return to the user, or all samples are exhausted. The process of arranging the points into Pareto fronts is called \emph{non-dominated sorting}.
\begin{figure*}[ht]
\vskip 0.2in
\begin{center}
{\includegraphics[trim=0cm 12.6cm 0cm 12.6cm, clip=true, width=18cm]{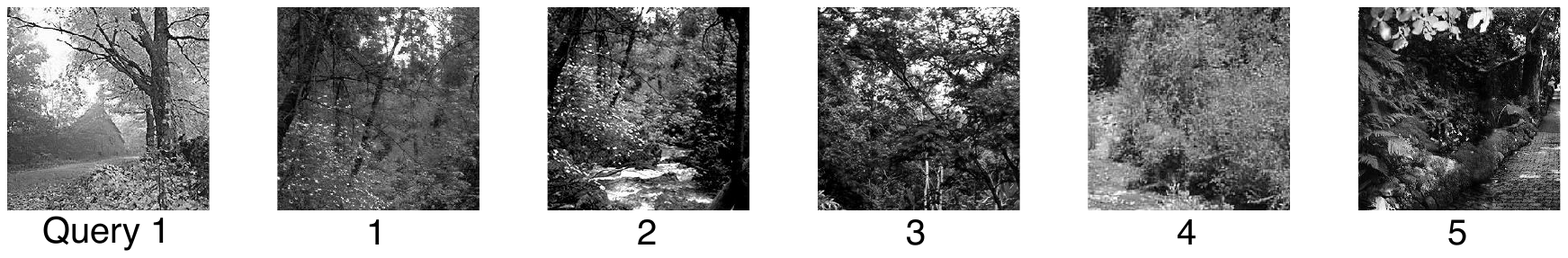}}\\
{\includegraphics[trim=0cm 12.6cm 0cm 12.6cm, clip=true, width=18cm]{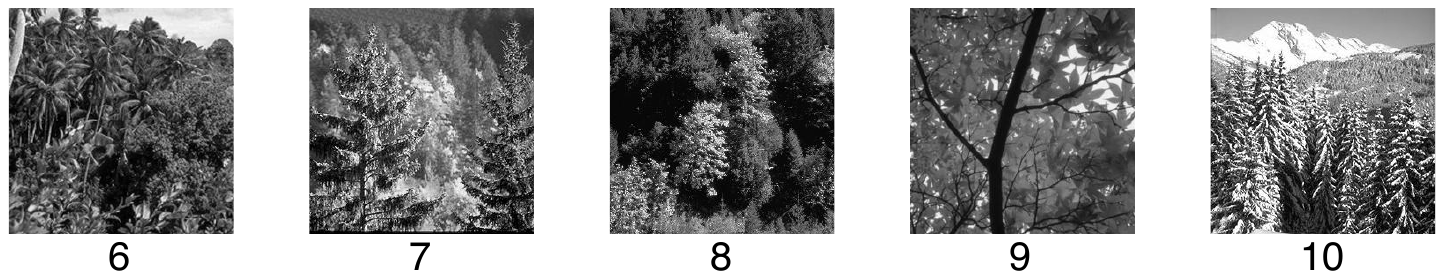}}\\
{\includegraphics[trim=0cm 12.6cm 0cm 12.6cm, clip=true, width=18cm]{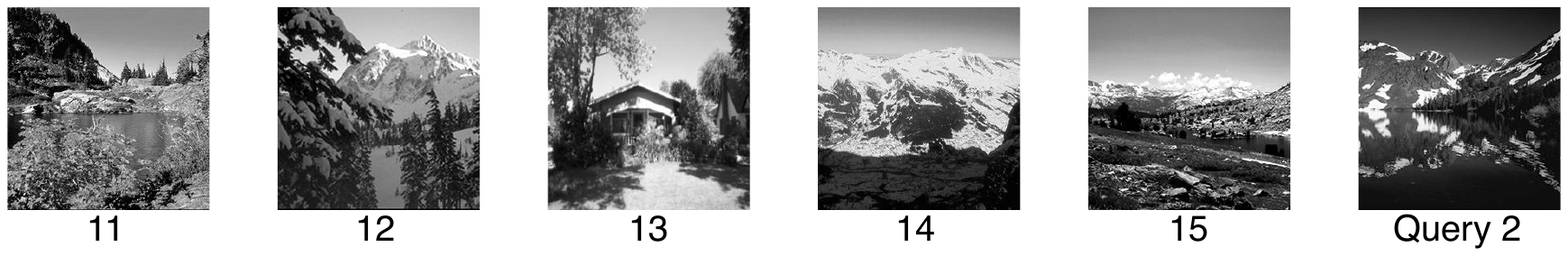}}
\caption{Images located on the first Pareto front when a pair of query images are issued. Images from the middle part of the front (images 10, 11 and 12) contain semantic information  from \emph{both} query images. The images are from {\bf Stanford 15 scene} dataset. }
\label{ImageSeq}
\end{center}
\vskip -0.2in
\end{figure*} 

A key observation in this work is that the middle of the Pareto front is of fundamental importance for the multiple-query retrieval problem. As an illustrative example, we show in Figure \ref{ImageSeq} the  images from the first Pareto front for a pair of query images corresponding to a forest and a mountain. The images are listed according to their position within the front, from one tail to the other.  The images located at the tails of the front are very close to one of the query images, and may not necessarily have any features in common with the other query.  However, as seen in Figure \ref{ImageSeq}, images in the middle of the front (e.g., images 10, 11 and 12) contain relevant features from \emph{both} queries, and hence are very desirable for the multiple-query retrieval problem.  It is exactly these types of images that our algorithm is designed to retrieve. 

The Pareto front method is well-known to have many advantages when the Pareto fronts are non-convex~\cite{ehrgott2005}. In this paper, we present a new theorem that characterizes the asymptotic convexity (and lack thereof) of Pareto fronts as the size of the database becomes large.  This result is based on establishing a connection between Pareto fronts and chains in partially ordered finite set theory.  The connection is as follows: a data point is on the Pareto front of depth $n$ if and only if it admits a maximal chain of length $n$.   This connection allows us to utilize results from the literature on the longest chain problem, which has a long history in probability and combinatorics.  Our main result (Theorem \ref{thm:macro}) shows that the Pareto fronts are asymptotically convex when the dataset can be modeled as \emph{i.i.d.}~random variables drawn from a continuous separable log-concave density function $f:[0,1]^d \to (0,\infty)$.  This theorem suggests that our proposed algorithm will be particularly useful when the underlying density is not log-concave.  We give some numerical evidence (see Figure \ref{basicReal}) indicating that the underlying density is typically not even quasi-concave.  This helps to explain the performance improvement obtained by our proposed Pareto front method.

We also note that our PFM algorithm could be applied to automatic image annotation of large databases. Here, the problem is to automatically assign keywords, classes or captioning to images in an unannotated or sparsely annotated database. Since images in the middle of first few Pareto fronts are relevant to all queries, one could issue different query combinations with known class labels or other metadata, and automatically annotate the images in the middle of the first few Pareto fronts with the metadata from the queries.  This procedure could, for example, transform a single-class labeled image database into one with multi-class labels. Some major works and introductions to automatic image annotation can be found in \cite{jeon2003automatic, carneiro2007supervised, russell2008labelme}.

The rest of this paper is organized as follows. We discuss related work in Section \ref{related}. In Section \ref{PFM}, we introduce the Pareto front method and present a theoretical analysis of the convexity properties of Pareto fronts. In Section \ref{ParetoIR} we show how to apply the Pareto front method (PFM) to the multiple-query retrieval problem and briefly introduce Efficient Manifold Ranking. Finally, in Section \ref{experiment} we present experimental results and demonstrate a graphical user interface (GUI) that allows the user to explore the Pareto fronts and visualize the partially ordered relationships between the queries and the images in the database.

\section{Related work}
\label{related}

\subsection{Content-based image retrieval}

Content-based image retrieval~(CBIR) has become an important problem over the past two decades. Overviews can be found in \cite{liu2007survey, datta2008image}. A popular image retrieval system is query-by-example (QBE) \cite{zloof1975query,hirata1992query}, which retrieves images relevant to one or more queries provided by the user. In order to measure image similarity, many sophisticated color and texture feature extraction algorithms have been proposed; an overview can be found in \cite{liu2007survey,datta2008image}. SIFT \cite{lowe2004} and HoG \cite{dalal2005histograms} are two of most well-known and widely used feature extraction techniques in computer vision research. Several CBIR techniques using multiple queries have been proposed \cite{arandjelovic2012multiple,jin2005improving}. Some methods combine the queries together to generate a query center, which is then modified with the help of relevance feedback.  Other algorithms issue each query individually to introduce diversity and gather retrieved items scattered in visual feature space \cite{jin2005improving}. 

The problem of ranking large databases with respect to a  similarity measure has drawn great attention in the machine learning and information retrieval fields. Many approaches to ranking have been proposed, including learning to rank \cite{liu2009learning,burges2005learning}, content-based ranking models (BM25, Vector Space Model), and link structure ranking model \cite{brin1998anatomy}. Manifold ranking \cite{zhou2004ranking,zhou2004learning} is an effective ranking method that takes into account the underlying geometrical structure of the database. Xu et al.~\cite{xu2011} introduced an algorithm called Efficient Manifold Ranking (EMR) which uses an anchor graph to do efficient manifold ranking that can be applied to large-scale datasets. In this paper, we use EMR to assign a rank to each sample with respect to each query before applying our Pareto front method.

\subsection{Pareto method}

There is wide use of Pareto-optimality in the machine learning community \cite{jin2008pareto}. 
Many of these methods must solve complex multi-objective optimization problems, where finding even the first Pareto front is challenging.  
Our use of Pareto-optimality differs as we generate \emph{multiple} Pareto fronts from a finite set of items, and as such we do not require sophisticated methods to compute the fronts.

In computer science the first Pareto front, which consists of the set of non-dominated points, is often called the Skyline. Several sophisticated and efficient algorithms have been developed for computing the Skyline \cite{borzsony2001skyline, kossmann2002shooting, tan2001efficient, papadias2003optimal}. Various Skyline techniques have been proposed for different applications in large-scale datasets, such as multi-criteria decision making, user-preference queries, and data mining and visualization \cite{hristidis2001prefer,jin2004mining,agrawal2000framework}. Efficient and fast Skyline algorithms \cite{kossmann2002shooting} or fast non-dominated sorting \cite{jensen2003} can be used to find each Pareto front in our PFM algorithm for large-scale datasets. 

Sharifzadeh and Shahabi\cite{sharifzadeh2006spatial} introduced Spatial Skyline Queries (SSQ) which is similar to the multiple-query retrieval problem.  However, since EMR is not a metric (it doesn't satisfy the triangle inequality), the relation between the first Pareto front and the convex hull of the queries, which is exploited by Sharifzadeh and Shahabi\cite{sharifzadeh2006spatial}, does not hold in our setting.  Our method also differs from SSQ and other Skyline research because we use multiple fronts to rank items instead of using only Skyline queries.  We also address the problem of combining EMR with the Pareto front method for multiple queries associated with \emph{different} concepts, resulting in \emph{non-convex} Pareto fronts.  To the best of our knowledge, this problem has not been widely researched. 

A similar Pareto front method has been applied to the gene ranking problem \cite{hero2004pareto}. Their approach utilized Pareto methods to rank genes based on multiple criteria of interest to a biologist.  In another related work, Hsiao et al.~\cite{hsiao2012} proposed a multi-criteria anomaly detection algorithm utilizing Pareto depth analysis. This approach uses multiple Pareto fronts to define a new dissimilarity between samples based on their Pareto depth. In their case, each Pareto point corresponds to a similarity vector between  pairs of database entries under muitiple similarity criteria. In this paper, a Pareto point corresponds to a vector of dissimilarities between a single entry in the database and multiple queries. 

A related field is metasearch \cite{aslam2001models, meng2002building}, in which one query is issued in different systems or search engines, and  different ranking results or scores for each item in the database are obtained. These different scores are then combined to generate one final ranked list. Many different methods, such as Borda fuse and CombMNZ, have been proposed and are widely used in the metasearch community.  The same methods have also been used to combine results for different representations of a query\cite{lee1997analyses,belkin1993effect}. However these algorithms are designed for the case that the queries represent the same semantics. In the multiple-query retrieval setting this case is not very interesting as it can easily be handled by other methods, including linear scalarization. 

In contrast we study the problem where each query corresponds to a different image concept.  In this case metasearch methods are not particularly useful, and  are significantly outperformed by the Pareto front method.  For example Borda fusion gives higher rankings to the tails of the fronts, and thus is similar to linear scalarization. CombMNZ gives a higher ranking to documents that are relevant to multiple-query aspects, but it utilizes a sum of all document scores, and as such is intimately related to linear scalarization with equal weights, which is equivalent to the Average of Multiple Queries (MQ-Avg) retrieval algorithm~\cite{arandjelovic2012multiple}. We show in Section \ref{experiment} that our Pareto front method significantly outperforms MQ-Avg, and all other multiple-query retrieval algorithms.

Another related field is multi-view learning \cite{blum1998combining,sindhwani2005co}, in which data is represented by multiple sets of features that are referred to as ``views''.  Training in one view will usually improve the learning in another, although there is often view disagreement, in which the same sample may belong to a different class in each view \cite{christoudias2012multi}. The views are similar to criteria in our problem setting. However, different criteria may be orthogonal and could even give contradictory information; hence there may be severe view disagreement, and training in one view could actually worsen performance in another view.  A similar area is that of multiple kernel learning \cite{gonen2011multiple}, which is typically applied to supervised learning problems instead of retrieval problems.

\section{Pareto Front method}
\label{PFM}

Pareto-optimality is a powerful concept that has been applied in many fields, including economics, computer science, and the social sciences \cite{ehrgott2005}.  We give here a brief overview of Pareto-optimality and define the notion of a Pareto front.

In the general setting of a discrete multi-objective optimization problem, we have a finite set $\mathcal{S}$ of feasible solutions, and $T$ criteria $f_1,\dots,f_T:  \mathcal{S}\to \R$ for evaluating the feasible solutions. One possible goal is to find $x \in \mathcal{S}$ minimizing \emph{all} criteria simultaneously. In most settings, this is an impossible task.   Many approaches to the multi-objective optimization problem reduce to combining all $T$ criteria into one.  When this is done with a linear combination, it is usually called \emph{linear scalarization} \cite{ehrgott2005}.  Different choices of weights in the linear combination yield different minimizers.  Without prior knowledge of the relative importance of each criterion, one must employ a grid search over all possible weights to identify a set of feasible solutions. 

A more robust and principled approach involves finding the Pareto-optimal solutions.  We say that a feasible solution $x \in \mathcal{S}$ is \emph{Pareto-optimal} if no other feasible solution ranks better in every objective.  More precisely, we say that $x$ \emph{strictly dominates} $y$ if $f_i(x) \leq f_i(y)$ for all $i$, and $f_j(x) < f_j(y)$ for some $i$.  An item $x\in \mathcal{S}$ is Pareto-optimal if it is not strictly dominated by another item.  The collection of Pareto-optimal feasible solutions is called the first \emph{Pareto front}.  It contains all solutions that can be found via linear scalarization, as well as other items that are missed by linear scalarization.   The first Pareto front is denoted by $\F_1$.   The second Pareto front, $\F_2$, is obtained by removing the first Pareto front from $\mathcal{S}$ and finding the Pareto front of the remaining data.  More generally, the $i^{\rm th}$ Pareto front is defined by
$$\F_i=\text{Pareto front of the set }
\mathcal{S} \setminus \left(\bigcup_{j=1}^{i-1}\F_j\right).$$
If $x \in \F_k$ we say that $x$ is at a \emph{Pareto depth} of $k$.  We say that a Pareto front $\F_i$ is \emph{deeper} 
than $\F_j$ if $i>j$. 

\begin{figure*}[!t]
\centering
\subfigure[]{\includegraphics[width=0.4\textwidth]{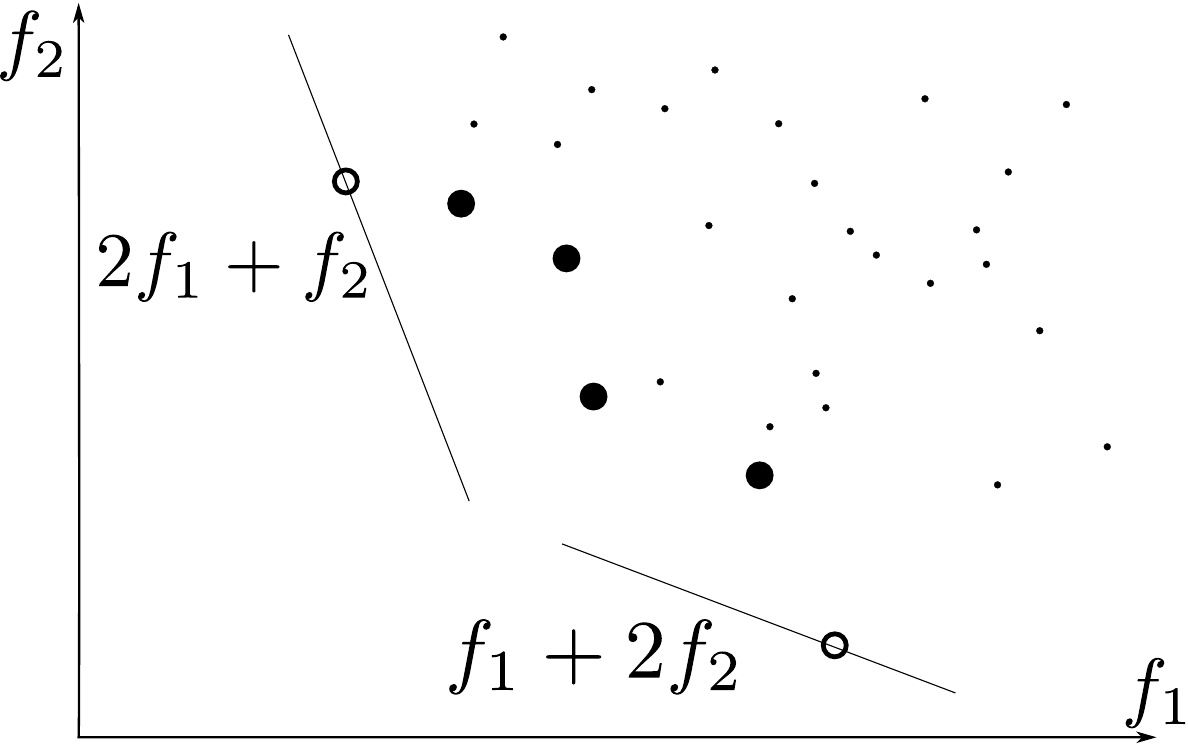}\label{basicExample}}
\hspace{0.75in}
\subfigure[]{\includegraphics[width=0.32\textwidth]{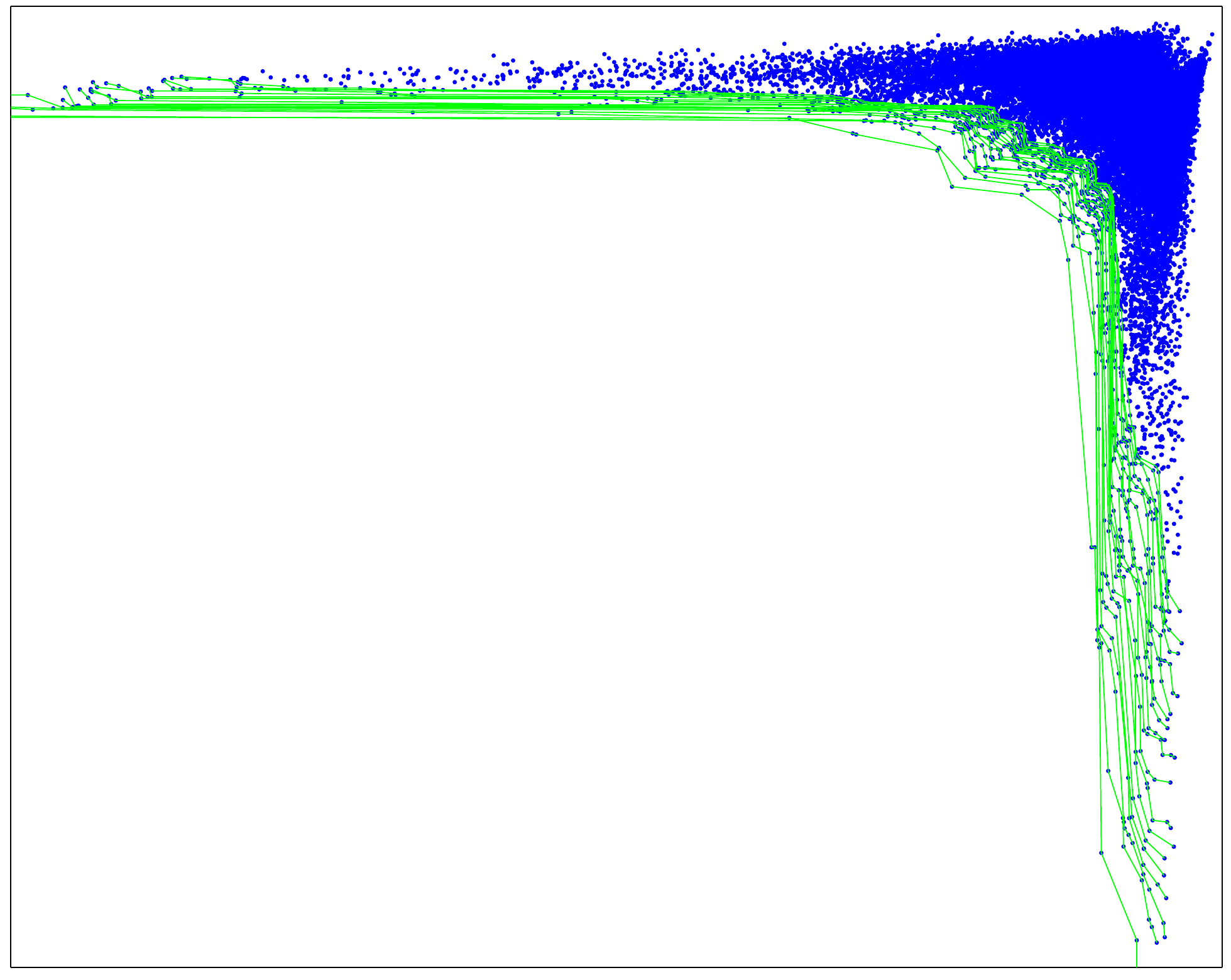}\label{basicReal}}
\caption{(a) Depiction of nonconvexities in the first Pareto front.  The large points are Pareto-optimal, having the highest Pareto ranking by criteria $f_1$ and $f_2$, but only the hollow points can be obtained by linear scalarization. Here $f_1$ and $f_2$ are the dissimilarity values for query 1 and query 2, respectively. (b) Depiction of nonconvexities in the Pareto fronts in the real-world {\bf Mediamill} dataset used in the experimental results in Section \ref{experiment}. The points on the non-convex portions of the fronts will  be retrieved later by any scalarization algorithm, even though they correpsond to equally good images for the retrieval problem. }
\end{figure*}
\setcounter{subfigure}{0}

The simple example in Figure \ref{basicExample} shows the advantage of using Pareto front methods for ranking. Here the number of criteria is $T =2$ and the Pareto points $[f_1(x), f_2(x)]$, for $x \in S$, are shown in Figure \ref{basicExample}. In this figure the large points are Pareto-optimal, but only the hollow points can be obtained as top ranked items using linear scalarization. It is well-known, and easy to see in Figure \ref{basicExample}, that linear scalarization can only obtain Pareto points on the boundary of the convex hull of the Pareto front.  The same observation holds for deeper Pareto fronts.  Figure \ref{basicReal} shows Pareto fronts for the multiple-query retrieval problem using real data from the \textbf{Mediamill} dataset, introduced in Section \ref{experiment}.  Notice the severe non-convexity in the shapes of the real Pareto fronts in Figure \ref{basicReal}.  This is a key observation, and is directly related to the fact that each query corresponds to a differnt image semantic, and so there are no images that are very closely related to both queries.

\subsection{Information retrieval using Pareto fronts}
\label{IRusingPRM}
In this section we introduce the  Pareto front method for the multiple-query information retrieval problem. Assume that a dataset $\mathcal{X}_N=\{X_1,\ldots,X_N\}$ of data samples is available. Given a query $q$, the objective of retrieval is to return samples that are related to the query. When multiple queries are present, our approach issues each query individually and then combines their results into one partially ordered list of Pareto-equivalent retrieved items at successive Pareto depths.
For $T>1$, denote the $T$-tuple of queries by $\{q_1, q_2,...,q_T\}$ and the dissimilarity between $q_i$ and the $j^{\rm th}$ item in the database, $X_j$, by $d_i(j)$. For convenience, define $d_i\in\mathbb{R}_{+}^N$ as the dissimilarity vector between $q_i$ and all samples in the database.
Given $T$ queries, we define a \emph{Pareto point} by $P_{j}=[d_1(j), \ldots, d_T(j)] \in \mathbb{R}_{+}^T, j \in \{1,\ldots,N\}$. Each Pareto point $P_j$ corresponds to a sample $X_j$ from the dataset $\mathcal{X}_N$.  For convenience, denote the set of all Pareto points by $\Pset$. By definition, a Pareto point $P_i$ strictly dominates another point $P_j$ if $d_l(i) \leq d_l(j)$  for all $l \in \{1, \ldots, T\}$ and $d_l(i) < d_l(j)$ for some $l$.  One can easily see that if $P_i$ dominates $P_j$, then $X_i$ is closer to every query than $X_j$. Therefore, the system should return $X_i$ before $X_j$. 
The key idea of our approach is to return samples corresponding to which Pareto front they lie on, i.e., we return the points from $\F_1$ first, and then $\F_2$, and so on until a sufficient number of images have been retrieved.  Since our goal is to find images related to each and every query, we start returning samples from the middle of the first Pareto front and work our way to the tails.  The details of our algorithm  are presented in Section \ref{ParetoIR}.

\subsection{Properties of Pareto fronts}
\label{sec:theory}
Previous works have studied the distribution of the number of Pareto-optimal points missed by linear scalarization.  Hsiao et al.\cite{hsiao2012} prove two theorems characterizing how many Pareto-optimal points are missed, on average and asymptotically, due to nonconvexities in the geometry of the Pareto point cloud, called \emph{large-scale} non-convexities, and nonconvexities due to randomness of the Pareto points, called \emph{small-scale} nonconvexities.  In particular, Hsiao et al.\cite{hsiao2012} show that even when the Pareto point cloud appears convex, at least $1/6$ of the Pareto-optimal points are missed by linear scalarization in dimension $T=2$.  

We present here some new results on the asymptotic convexity of Pareto fronts.  Let $X_1,\dots,X_n$ be \emph{i.i.d.}~random variables on $[0,1]^d$ with probability density function $f:[0,1]^d\to \R$ and set $\X_n =\{X_1,\dots,X_n\}$.  Then $(\X_n,\leqq)$ is a partially ordered set, where $\leqq$ is the usual partial order on $\R^d$ defined by
\[x \leqq y \iff x_i \leq y_i \  {\rm for \ all \ } i\in \{1,\dots,d\}.\]
Let $\F_1,\F_2,\dots$ denote the Pareto fronts associated with $\X_n$, and let $h_n:[0,1]^d \to \R$ denote the Pareto depth function defined by
\begin{equation}
h_n(x) = \max \{i \in \N \, : \, \F_i \leqq x \},
\end{equation}
where for simplicity we set $\F_0 = \{(-1,\dots,-1)\}$, and we write $\F_i \leqq x $ if there exists $y \in \F_i$ such that $y  \leqq x$.
The function $h_n$ is a (random) piecewise constant function that ``counts'' the Pareto fronts associated with $X_1,\dots,X_n$. 

Recall that a \emph{chain} of length $\ell$ in $\X_n$ is a sequence $x^1,\dots,x^\ell \in \X_n$ such that $x^1 \leqq x^2 \leqq \cdots \leqq x^\ell$. Define $u_n:[0,1]^d \to \R$ by
\begin{equation}
u_n(x) = \max \{\ell \in \N \, : \, \exists \ x^1 \leqq \cdots \leqq x^\ell \leqq x {\rm \ in \ } \X_n\}.
\end{equation}
The function $u_n(x)$ is the length of the longest chain in $\X_n$ with maximal element $x_\ell \leqq x$.   We have the following alternative characterization of $h_n$:
\begin{proposition}\label{prop:chain}
$h_n(x) = u_n(x)$ with probability one for all $x \in [0,1]^d$.
\end{proposition}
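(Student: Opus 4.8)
The plan is to route both $h_n$ and $u_n$ through a single combinatorial quantity attached to each data point, namely the length of the longest chain terminating at that point. For $Y \in \X_n$, let $r(Y)$ denote the length of the longest chain $x^1 \leqq \cdots \leqq x^\ell = Y$ in $\X_n$ having $Y$ as its maximal element. The heart of the argument is the claim that $\F_i = \{Y \in \X_n : r(Y) = i\}$ for every $i \geq 1$; once this identity is in hand, both $h_n(x)$ and $u_n(x)$ are seen to equal $\max\{r(Y) : Y \in \X_n,\ Y \leqq x\}$, which immediately gives the proposition.

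First I would dispose of the probabilistic part. Since $f$ is a density, with probability one the points $X_1,\dots,X_n$ are pairwise distinct; all the combinatorics below is carried out on this full-measure event, and this is the only place randomness enters, since coincident points are precisely the degenerate configurations that could break the correspondence (which is why the statement is almost sure rather than deterministic). On this event the poset relation and strict domination agree for distinct points: if $Y \leqq X$ and $Y \neq X$, then some coordinate of $Y$ is strictly smaller, so $Y$ strictly dominates $X$. Consequently $\F_1$ is exactly the set of $\leqq$-minimal elements of $\X_n$, i.e.\ the set of $Y$ with $r(Y) = 1$.

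The combinatorial core, and the main obstacle, is the identity $\F_i = \{r = i\}$, which I would establish by induction on $i$ using two elementary monotonicity facts. (i) If $Y \leqq X$ and $Y \neq X$, then $r(Y) < r(X)$, since appending $X$ to any chain terminating at $Y$ produces a strictly longer chain terminating at $X$. (ii) If $r(X) = k \geq 2$, then there is some $Y \leqq X$ with $Y \neq X$ and $r(Y) = k-1$, obtained by taking the second-to-last element of a longest chain terminating at $X$. For the induction, assume $\F_j = \{r = j\}$ for all $j < i$; then the set remaining after deleting the first $i-1$ fronts is exactly $\{Y : r(Y) \geq i\}$, and fact (i) shows that a point with $r(Y) = i$ is minimal in this set, while fact (ii) shows that a minimal point cannot have $r(Y) > i$. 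Hence the minimal elements of the remaining set, which by definition form $\F_i$, are precisely $\{r = i\}$.

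Finally I would read off the two functions. By the definition of $u_n$, a chain counted by $u_n(x)$ has maximal element $Y \leqq x$ and length at most $r(Y)$, and this bound is attained, so $u_n(x) = \max\{r(Y) : Y \in \X_n,\ Y \leqq x\}$. By the definition of $h_n$ together with $\F_i = \{r = i\}$, the indices $i$ for which some $Y \in \F_i$ satisfies $Y \leqq x$ are exactly the values $\{r(Y) : Y \leqq x\}$, so $h_n(x)$ equals the same maximum; the conventions $\F_0 = \{(-1,\dots,-1)\}$ and the empty chain handle the case where no $Y \leqq x$, giving $h_n(x) = u_n(x) = 0$ there. Comparing the two expressions yields the equality for every $x$ on the full-measure event, completing the proof.
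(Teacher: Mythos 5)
Your proof is correct. It takes a somewhat different route from the paper's: you first establish the stronger structural identity $\F_i = \{Y \in \X_n : r(Y) = i\}$ --- that non-dominated sorting coincides with the level sets of the longest-chain function, a Mirsky-type antichain decomposition --- by induction on $i$, and then read off both $h_n(x)$ and $u_n(x)$ as $\max\{r(Y) : Y \in \X_n, \ Y \leqq x\}$. The paper instead fixes $x$ and proves the two inequalities directly: the elements of a chain ending below $x$ must lie on distinct fronts (since each front is an antichain once the points are distinct), which combined with the nesting property $\F_i \leqq x \implies \F_j \leqq x$ for $j \leq i$ gives $u_n(x) \leq h_n(x)$; and backtracking from a point of $\F_k$ below $x$ through $\F_{k-1}, \F_{k-2}, \dots$ produces a chain of length $k = h_n(x)$, giving $h_n(x) \leq u_n(x)$. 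The combinatorial ingredients are essentially the same --- your fact (i) plays the role of the antichain property, your fact (ii) the role of the paper's front-to-front backtracking --- but your packaging yields a reusable statement about the fronts themselves (in particular, it identifies the front index of every data point as the length of the longest chain terminating there), at the cost of a slightly longer induction; the paper's argument is shorter and purely pointwise. Both proofs dispose of randomness identically, by restricting to the full-measure event that $X_1,\dots,X_n$ are distinct.
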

\begin{proof}
Suppose that $X_1,\dots,X_n$ are distinct.  Then each Pareto front consists of mutually incomparable points.  Let $x \in [0,1]^d$, $r = u_n(x)$ and $k=h_n(x)$.  By the definition of $u_n(x)$, there exists a chain $x_1 \leqq \cdots \leqq x_r$ in $\X_n$ such that $x_r \leqq x$.  Noting that each $x_i$ must belong to a different Pareto front, we see there are at least $r$ fronts $\F_i$ such that $\F_i \leqq x$.  Note also that for $j\leq i$,  $\F_i \leqq x \implies \F_j \leqq x$.  It follows that $\F_i \leqq x$ for $i=1,\dots,r$ and $u_n(x) = r \leq h_n(x)$.  For the opposite inequality, by definition of $h_n(x)$ there exists $x_k \in \F_k$ such that $x_k \leqq x$.  By the definition of $\F_k$, there exists $x_{k-1} \in \F_{k-1}$ such that $x_{k-1} \leqq x_k$.  By repeating this argument, we can find $x_1,\dots,x_k$ with $x_i \in \F_i$ and $x_1\leqq \cdots \leqq x_k$, hence we have exhibited a chain of length $k$ in $\X_n$ and it follows that $h_n(x) = k \leq u_n(x)$.  The proof is completed by noting that $X_1,\dots,X_n$ are distinct with probability one.
\end{proof}

 It is well-known \cite{ehrgott2005} that Pareto methods outperform more traditional linear scalarization methods when the Pareto fronts are non-convex.  In previous work~\cite{hsiao2012}, we showed that the Pareto fronts \emph{always} have microscopic non-convexities due to randomness, even when the Pareto fronts appear convex on a macroscopic scale.  Microscopic non-convexities only account for minor performance differences between Pareto methods and linear scalarization.  Macroscopic non-convexities induced by the geometry of the Pareto fronts on a macroscopic scale account for the major performance advantage of Pareto methods.  
 
 It is thus very important to characterize when the Pareto fronts are macroscopically convex.  We therefore make the following definition:
\begin{definition}\label{def:macroscopically-convex}
  Given a density $f:[0,1]^d \to [0,\infty)$, we say that $f$ yields \emph{macroscopically convex} Pareto fronts if for $X_1,\dots,X_n$ drawn \emph{i.i.d.}~from $f$ we have that the almost sure limit $U(x) :=\lim_{n\to \infty} n^{-\frac{1}{d}} h_n(x)$ exists for all $x$ and $U:[0,1]^d \to \R$ is quasiconcave.
\end{definition}
Recall that $U$ is said to be \emph{quasiconcave} if the super level sets 
\[\{x \in [0,1]^d \, : \, U(x) \geq a\}\]
are convex for all $a \in \R$. Since the Pareto fronts are encoded into the level sets of $h_n$, the asymptotic shape of the Pareto fronts is dictated by the level sets of the function $U$ from Definition \ref{def:macroscopically-convex}.  Hence the fronts are asymptotically convex on a macroscopic scale exactly when $U$ is quasiconcave, hence the definition.  

We now give our main result, which is a partial characterization of densities $f$ that yield macroscopically convex Pareto fronts.
\begin{theorem}\label{thm:macro}
  Let $f:[0,1]^d\to(0,\infty)$ be a continuous, log-concave, and separable density, i.e., $f(x)=f_1(x_1)\cdots f_d(x_d)$.  Then $f$ yields macroscopically convex Pareto fronts.
\end{theorem}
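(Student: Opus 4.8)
The plan is to reduce the general separable case to the case of a uniform density on $[0,1]^d$ via a coordinatewise monotone change of variables, and then to read off quasiconcavity from the resulting explicit limit. Write $f(x)=f_1(x_1)\cdots f_d(x_d)$, normalizing so that each $\int_0^1 f_i = 1$ (permissible since the factorization is unique up to constants whose product is one), and let $F_i(t)=\int_0^t f_i(s)\,ds$ be the cumulative distribution function of the $i$-th factor. Since $f>0$ is continuous, each $F_i$ is a $C^1$ strictly increasing bijection of $[0,1]$ onto itself with $F_i(0)=0$. Define $\Phi:[0,1]^d\to[0,1]^d$ by $\Phi(x)=(F_1(x_1),\dots,F_d(x_d))$. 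First I would observe that, because $f$ is separable, the coordinates of each $X_j$ are independent, so by the probability integral transform $\Phi(X_1),\dots,\Phi(X_n)$ are i.i.d.\ and uniformly distributed on $[0,1]^d$.

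The crucial structural point is that $\Phi$ is strictly increasing in each coordinate, hence an order isomorphism for $\leqq$, i.e.\ $x\leqq y \iff \Phi(x)\leqq\Phi(y)$. Consequently $\Phi$ carries chains in $\X_n$ bijectively onto chains in the transformed sample, so the longest-chain functions satisfy $u_n(x)=\tilde u_n(\Phi(x))$, where $\tilde u_n$ is the longest-chain function of a uniform sample; by Proposition \ref{prop:chain} the same identity holds for $h_n$ almost surely. I would then invoke the classical continuum limit for longest chains among uniform random points (Bollob\'as--Winkler): for a uniform sample, $n^{-1/d}\tilde u_n(y)\to c_d(y_1\cdots y_d)^{1/d}$ almost surely, where $c_d$ is the universal longest-chain constant. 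Taking $y=\Phi(x)$ shows that the almost sure limit in Definition \ref{def:macroscopically-convex} exists and equals
\[ U(x)=c_d\prod_{i=1}^d F_i(x_i)^{1/d}. \]

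It remains to prove $U$ is quasiconcave, and this is where log-concavity enters. On $\{U>0\}=(0,1]^d$ we have $\log U=\log c_d+\tfrac1d\sum_i\log F_i(x_i)$, a sum of univariate functions, so it suffices that each $\log F_i$ be concave. First, separability plus log-concavity of $f$ forces each factor to be log-concave: $\log f(x)=\sum_i\log f_i(x_i)$ is concave on $[0,1]^d$, and restricting to lines parallel to the coordinate axes shows each $\log f_i$ is concave. Second, I would use the standard fact (Pr\'ekopa; Bagnoli--Bergstrom) that the cumulative distribution function of a log-concave density is itself log-concave, so each $\log F_i$ is concave. Hence $\log U$ is concave, $U$ is log-concave, and in particular quasiconcave: for $a\le 0$ the super-level set $\{U\ge a\}$ is all of $[0,1]^d$, while for $a>0$ it is a super-level set of the concave function $\log U$, hence convex. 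This proves that $f$ yields macroscopically convex Pareto fronts.

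The step requiring the most care is the application of the uniform continuum limit. Bollob\'as--Winkler controls the \emph{global} longest chain, whereas I need the \emph{local} quantity $\tilde u_n(y)$, the longest chain among points dominated by $y$; I would obtain this by restricting to the sub-box $\prod_i[0,y_i]$, rescaling by a monotone (hence chain-preserving) map, and controlling the random number of points therein via the law of large numbers. Moreover, Definition \ref{def:macroscopically-convex} demands the almost sure limit for \emph{every} $x$ simultaneously, not merely for a fixed $x$; since $u_n$ is monotone nondecreasing in $x$ and $U$ is continuous and monotone, I would first establish convergence on a countable dense set and then upgrade to all $x$ by a Glivenko--Cantelli-type monotonicity argument.
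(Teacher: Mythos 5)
Your proposal is correct and follows essentially the same route as the paper's own proof: the same coordinatewise CDF transformation $\Phi$ reducing to the uniform case, the same localization of the Bollob\'as--Winkler longest-chain limit via a binomial point count and the law of large numbers, Proposition \ref{prop:chain} to pass from chains to fronts, and the same Pr\'ekopa-type log-concavity argument giving quasiconcavity of the limit $U = c_d F(x)^{1/d}$. The only differences are minor refinements: you verify log-concavity factor-by-factor using the univariate log-concave-CDF fact rather than citing the multivariate Pr\'ekopa result directly, and you add extra care (which the paper does not bother with, treating the limit pointwise in $x$) about making the almost-sure convergence simultaneous over all $x$.
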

\begin{proof}
We denote by $F:[0,1]^d \to \R$ the cumulative distribution function (CDF) associated with the density $f$, which is defined by
\begin{equation}\label{eq:cdf}
F(x) = \int_0^{x_1} \cdots \int_0^{x_d} f(y_1,\dots,y_d) \, dy_1\cdots dy_d.
\end{equation}
Let $X_1,\dots,X_n$ be \emph{i.i.d.}~ with density $f$, and let $h_n$ denote the associated Pareto depth function, and $u_n$ the associated longest chain function.  By \cite[Theorem 1]{calder2014} we have that for every $x \in [0,1]^d$
\begin{equation}\label{eq:continuum}
  n^{-\frac{1}{d}} u_n(x) \longrightarrow U(x) \ \ \text{almost surely as } n \to \infty,
\end{equation}
where $U(x) = c_dF(x)^\frac{1}{d}$, and $c_d$ is a positive constant. In fact, the convergence is actually uniform on $[0,1]^d$ with probability one, but this is not necessary for the proof.  For a general non-separable density, the continuum limit \eqref{eq:continuum} still holds, but the limit $U(x)$ is not given by $c_dF(x)^\frac{1}{d}$ (it is instead the viscosity solution of a Hamilton-Jacobi equation), and the proof is quite involved (see~\cite{calder2014}). Fortunately, for the case of a separable density the proof is straightforward, and so we include it here for completeness.  

  Define $\Phi:[0,1]^d\to[0,1]^d$ by 
\[\Phi(x) = \left( \int_0^{x_1} f_1(t)\, dt , \dots, \int_0^{x_d} f_d(t) \, dt \right).\]
Since $f$ is continuous and strictly positive, $\Phi:[0,1]^d \to [0,1]^d$ is a $C^1$-diffeomorphism.  Setting $Y_i = \Phi(X_i)$, we easily see that $Y_1,\dots, Y_d$ are independent and uniformly distributed on $[0,1]^d$.  It is also easy to see that $\Phi$ preserves the partial order $\leqq$, i.e.,
\[x \leqq z \iff \Phi(x) \leqq \Phi(z).\]
Let $x \in [0,1]^d$, set $y=\Phi(x)$, and define $\Y_n = \Phi(\X_n)$.  By our above observations we have
\[u_n(x) = \max \{\ell \in \N \, : \, \exists \ y_1\leqq \cdots \leqq y_\ell \leqq y {\rm \ in \ } \Y_n\}.\]
Let $i_1 < \dots < i_N$ denote the indices of the random variables among $Y_1,\dots,Y_n$ that are less than or equal to $y$ and set $Z_k = Y_{i_k}$ for $k=1,\dots,N$.  Note that $N$ is binomially distributed with parameter $p:=F(x)$ and that $u_n(x)$ is the length of the longest chain among $N$ uniformly distributed points in the hypercube $\{z \in [0,1]^d \, : z \leqq y\}$.  By \cite[Remark 1]{bollobas1988} we have $N^{-\frac{1}{d}}u_n(x) \to c_d$ almost surely as $n\to \infty$ where $c_d < e$ are dimensional constants.  Since $n^{-1} N \to p$ almost surely as $n \to \infty$, we have
\[n^{-\frac{1}{d}} u_n(x) = \left(n^{-\frac{1}{d}} N^{\frac{1}{d}}\right) N^{-\frac{1}{d}} u_n(x)  \to c_d p^\frac{1}{d}\]
almost surely as $n\to \infty$.  The proof of \eqref{eq:continuum} is completed by recalling Proposition \ref{prop:chain}.

In the context of Definition \ref{def:macroscopically-convex}, we have $U(x) = c_dF(x)^\frac{1}{d}$. 
Hence $U$ is quasiconcave if and only if the cumulative distribution function $F$ is quasiconcave.  A sufficient condition for quasiconcavity of $F$ is log-concavity of $f$~\cite{prekopa1973}, which completes the proof.  
\end{proof}
Theorem \ref{thm:macro} indicates that Pareto methods are largely redundant when $f$ is a log-concave separable density.  As demonstrated in the {\bf Mediamill} \cite{snoek2006challenge} dataset (see Figure \ref{basicReal}), the distribution of points in Pareto space is not quasiconcave, and hence not log-concave, for the multiple-query retrieval problem.  This helps explain the success of our Pareto methods.  

 It would be very interesting to extend Theorem \ref{thm:macro} to arbitrary non-separable density functions $f$.  When $f$ is non-separable there is no simple integral expression like \eqref{eq:cdf} for $U$, and instead $U$ is characterized as the viscosity solution of a Hamilton-Jacobi partial differential equation~\cite[Theorem 1]{calder2014}.  This makes the non-separable case substantially more difficult, since $U$ is no longer an integral functional of $f$.   See~\cite{calder2014ITA} for a brief overview of our previous work on a continuum limit for non-dominated sorting~\cite{calder2013b,calder2014}.

\section{Multiple-query image retrieval}
\label{ParetoIR}

For most CBIR systems, images are preprocessed to extract low dimensional features instead of using pixel values directly for indexing and retrieval. Many feature extraction methods have been proposed in image processing and computer vision. In this paper we use the famous SIFT and HoG feature extraction techniques and apply spatial pyramid matching to obtain bag-of-words type features for image representation.  To avoid comparing every sample-query pair, we use an efficient manifold ranking algorithm proposed by \cite{xu2011}.

\subsection{Efficient manifold ranking (EMR)}
The traditional manifold ranking problem \cite{zhou2004ranking} is as follows. Let $\mathcal{X}=\{X_1,\ldots,X_n\} \subset \R^m$ be a finite set of points, and let $d:\mathcal{X} \times \mathcal{X} \rightarrow \R$ be a metric on $\mathcal{X}$, such as Euclidean distance. Define a vector $y = [y_1,\ldots,y_n]$, in which $y_i =1$ if $X_i$ is a query and $y_i = 0$ otherwise. Let $r:\mathcal{X} \rightarrow \R$ denote the ranking function which assigns a ranking score $r_i$ to each point $X_i$. The query is assigned a rank of $1$ and all other samples will be assigned smaller ranks based on their distance to the query along the manifold underlying the data.  To construct a graph on $\mathcal{X}$, first sort the pairwise distances between all samples in ascending order, and then add edges between points according to this order until a connected graph $G$ is constructed. The edge weight between $X_i$ and $X_j$ on this graph is denoted by $w_{ij}$. If there is an edge between $X_i$ and $X_j$, define the weight by $w_{ij} = exp[-d^2(X_i,X_j)/2\sigma^2]$, and if not, set $w_{ij}=0$, and set $W = (w_{ij})_{ij} \in \R^{n\times n}$.  In the manifold ranking method, the cost function associated with ranking vector $r$ is defined by 
\begin{equation*}
O(r)= \sum^{n}_{i,j = 1}w_{ij}|\frac{1}{\sqrt{D_{ii}}}r_i -\frac{1}{\sqrt{D_{jj}}}r_j|^2 + \mu \sum^{n}_{i=1}|r_i-y_i|^2\\
\end{equation*}
where $D$ is a diagonal matrix with $D_{ii} = \sum^{n}_{j=1}w_{ij}$ and $\mu >0$ is the regularization parameter. The first term in the cost function is a smoothness term that forces nearby points have similar ranking scores.  The second term is a regularization term, which forces the query to have a rank close to $1$, and all other samples to have ranks as close to $0$ as possible.  The ranking function $r$ is the minimizer of $O(r)$ over all possible ranking functions.  

This optimization problem can be solved in either of two ways: a direct approach and an iterative approach. The direct approach computes the exact solution via the closed form expression
\begin{equation}\label{rstar}
r^* = (I_n-\alpha S)^{-1}y 
\end{equation}
where $\alpha = \frac{1}{1+\mu}$, $I_n$ is an $n\times n$ identity matrix and $S = D^{-1/2}WD^{-1/2}$. The iterative method is better suited to large scale datasets. The ranking function $r$ is computed by repeating the iteration scheme$r(t+1) = \alpha Sr(t)+(1-\alpha)y,$
until convergence.
The direct approach requires an $n\times n$ matrix inversion and the iterative approach requires $n\times n $  memory and may converge to a local minimum. In addition, the complexity of constructing the graph $G$ is $O(n^2\log n)$.  Sometimes a $kNN$ graph is used for $G$, in which case the complexity is $O(kn^2)$. Neither case is suitable for large-scale problems.

In \cite{xu2011}, an efficient manifold ranking algorithm is proposed. The authors introduce an anchor graph $U$ to model the data and use the Nadaraya-Watson kernel to construct a weight matrix $Z\in \R^{d\times n}$ which measures the potential relationships between data points in $\mathcal{X}$ and anchors in $U$. For convenience, denote by $z_i$ the $i$-th column of $Z$. The affinity matrix $W$ is then designed to be $Z^TZ$. The final ranking function $r$ can then be directly computed by 
\begin{equation}\label{rstar2}
r^* = (I_n-H^T(HH^T-\frac{1}{\alpha}I_d)^{-1}H)y,
\end{equation}
where $H = ZD^{-\frac{1}{2}}$ and $D$ is a diagonal matrix with $D_{ii} = \sum^{n}_{j=1}z^T_iz_j$.
This method requires inverting only a $d\times d$ matrix, in contrast to inverting the $n\times n$ matrix used in standard manifold ranking.
When $d\ll n$, as occurs in large databases, the computational cost of manifold ranking is significantly reduced. The complexity of computing the ranking function with the EMR algorithm is $O(dn+d^3)$. In addition, EMR does not require storage of an $n\times n$ matrix.

Notice construction of the anchor graph and computation of the matrix inversion \cite{xu2011} can be implemented offline. For out-of-sample retrieval, Xu et al.~\cite{xu2011} provides an efficient way to update the graph structure and do out-of-sample retrieval quickly.

\setcounter{subfigure}{0}

\subsection{Multiple-query case}

In \cite{xu2011}, prior knowledge about the relevance or confidence of each query can be incorporated into the EMR algorithm through the choice of the initial vector $y$. For example, in the multiple-query information retrieval problem, we may have queried, say, $X_1,X_2$ and $X_3$. We could set $y_1=y_2=y_3=1$ and $y_i=0$ for $i\geq 4$ in the EMR algorithm.  This instructs the EMR algorithm to assign high ranks to $X_1$, $X_2$, and $X_3$ in the final ranking $r^*$. It is easy to see from \eqref{rstar} or \eqref{rstar2} that $r^*$ is equal to the scalarization $r^*_1+r^*_2+r^*_3$ where $r^*_i$, $i=1,2,3,$ is the ranking function obtained when issuing each query individually. The main contribution of this paper is to show that our Pareto front method can outperform this standard linear scalarization method. Our proposed algorithm is given below.

Given a set of queries $\{q_1, q_2,...,q_T\}$, we apply EMR to compute the ranking functions $r^*_1\ldots r^*_T \in \R^N$ corresponding to each query. We then define the dissimilarity vector $d_i\in\R^N_+$ between $q_i$ and all samples by $d_i = \textbf{1}-r^*_i$ where $\textbf{1} = [1,\dots,1]\in \R^N$.  We then construct the Pareto fronts associated to $d_1,\dots,d_T$ as described in Section \ref{IRusingPRM}. To return relevant samples to the user, we return samples according to their Pareto front number, i.e., we return points on $\F_1$ first, then $\F_2$, and so on, until sufficiently many images have been retrieved. Within the same front, we return points in a particular order, e.g., for $T=2$, from the middle first. In the context of this paper, the middle part of each front will contain  samples related to \emph{all} queries. Relevance feedback schemes can also be used with our algorithm to enhance retrieval performance. For example one could use images labeled as relevant by the user as new queries to generate new Pareto fronts.

\section{Experimental study}
\label{experiment}

We now present an experimental study comparing our Pareto front method against several state of the art multiple-query retrieval algorithms.  Since our proposed algorithm was developed for the case where each query corresponds to a different semantic, we use multi-label datasets in our experiments.  By multi-label, we mean that many images in the dataset belong to more than one class.  This allows us to measure in a precise way our algorithm's ability to find images that are similar to all queries.   

\subsection{Multiple-query performance metrics}

We evaluate the performance of our algorithm using normalized Discounted Cumulative Gain (nDCG) \cite{jarvelin2002cumulated}, which is standard in the retrieval community.  The nDCG is defined in terms of a relevance score, which measures the relevance of a returned image to the query. In single query retrieval, a popular relevance score is the binary score, which is 1 if the retrieved image is related to the query and 0 otherwise.  In the context of multiple-query retrieval, where a retrieved image may be related to each query in distinct ways, the binary score is an oversimplification of the notion of relevance. Therefore, we define a new relevance score for performance assessment of multiple-query multiclass retrieval algorithms.  We call this relevance score  multiple-query unique relevance (mq-uniq-rel).

Roughly speaking, multiple-query unique relevance measures the fraction of query class labels that are covered by the retrieved object when the retrieved object is uniquely related to each query.  When the retrieved object is not uniquely related to each query, the relevance score is set to zero.  The importance of having unique relations to each query cannot be understated.  For instance, in the two-query problem, if a retrieved image is related to one of the queries only through a feature common to both queries, then the image is effectively relevant to only one of the those queries in the sense it would likely be ranked highly by a single-query retrieval algorithm issuing only one of the queries.  A more interesting and challenging problem, which is the focus of this paper, is to find images that have different features in common with each query.

More formally, let us denote by $C$ the total number of classes in the dataset, and let $\ell \in \{0,1\}^C$ be the binary label vector of a retrieved object $X$.  Similarly, let $y^i$ be the label vector of query $q_i$.  Given two label vectors $\ell^1$ and $\ell^2$, we denote by the logical disjunction $\ell^1\vee \ell^2$ (respectively, the logical conjunction $\ell^1\wedge \ell^2$) the label vector whose $j^{\rm th}$ entry is given by $\max(\ell^1_j,\ell^2_j)$ (respectively, $\min(\ell^1_j,\ell^2_j)$).   We denote by $|\ell|$ the number of non-zero entries in the label vector $\ell$.   Given a set of queries $\{q_1,\ldots,q_T\}$, we define the multiple-query unique relevance (mq-uniq-rel) of retrieved sample $X$ having label $\ell$ to the query set by
\begin{equation}\label{eq:mp2}
\text{mq-uniq-rel}(X) = \begin{cases}
\displaystyle \frac{|\ell \wedge \beta|}{|\beta|}, & \text{if } \forall i, |\ell \wedge (y^i-\eta^i)| \neq 0,\\
0,&  {\rm otherwise},
\end{cases}
\end{equation}
where $\beta = y^1\vee y^2 \vee\cdots \vee y^T$ is the disjunction  of the label vectors corresponding to $q_1,\dots,q_T$ and $\eta^i = \bigvee_{j\neq i} y^j\wedge y^i$.  Multiple-query unique relevance measures the fraction of query classes that the retrieved object belongs to whenever the retrieved image has a unique relation to each query, and is set to zero otherwise.

For simplicity of notation, we denote by $\text{mq-uniq-rel}_i$ the multiple-query unique relevance of the $i^{\rm th}$ retrieved image. The  Discounted Cumulative Gain (DCG) is then given by 
\begin{equation}\label{eq:mq-DCG}
\text{DCG} = \text{mq-uniq-rel}_1 + \sum_{i=2}^k \frac{\text{mq-uniq-rel}^1_i}{\log_2(i)},
\end{equation} 
The normalized DCG, or nDCG, is computed by normalizing the DCG by the best possible score which is $ 1 + \sum_{i=1}^k \frac{1}{\log_2(i)}$.

Note that, analogous to binary relevance score, we have $\text{mq-uniq-rel}_i = 1$ if and only if the label vector corresponding to the $i^{\rm th}$ retrieved object contains all labels from both queries and each query has at least one unique class.  The difference is that multiple-query relevance is not a binary score, and instead assigns a range of values between zero and one, depending on how many of the query labels are covered by the retrieved object. Thus, $\text{mq-uniq-rel}_i$ can be viewed as a generalization of the binary relevance score to the multiple-query setting in which the goal is to find objects uniquely related to all queries.

\subsection{Evaluation on multi-label datasets}
\begin{figure*}[t!]
\vskip 0.2in
\centering
\subfigure[{\bf Mediamill} dataset]{\includegraphics[width=0.9\columnwidth]{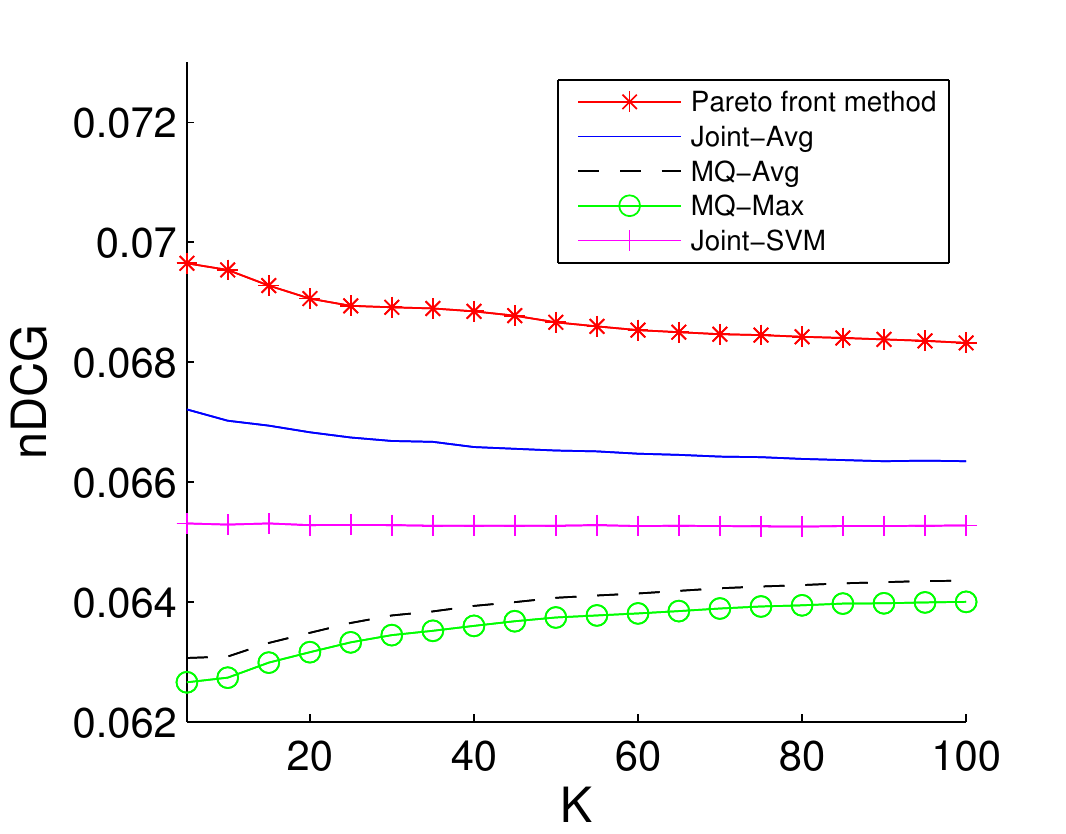}\label{fig:ndcg_large_withrules}}
\subfigure[{\bf LAMDA} dataset]{\includegraphics[width=0.9\columnwidth]{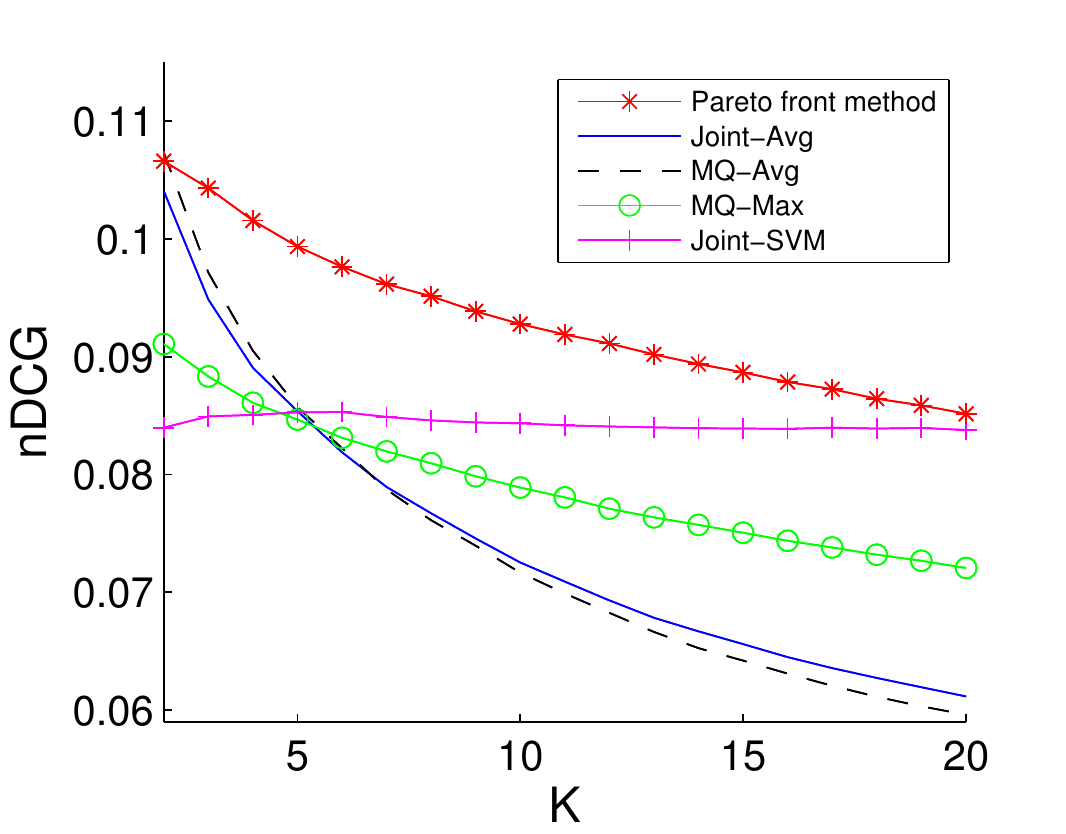}\label{fig:ndcg_small_withrules}}
\caption{Comparison of PFM against state of the art multiple-query retrieval algorithms for {\bf LAMDA} and {\bf Mediamill} dataset with respect to the nDCG defined in (\ref{eq:mq-DCG}). The proposed method significantly outperforms others on both datasets.
}
\end{figure*} 

We evaluate our algorithm on the {\bf Mediamill} video dataset \cite{snoek2006challenge}, which has been widely used for benchmarking multi-class classifiers, and the {\bf LAMDA} dataset, which is widely used in the retrieval community~\cite{zhou2006}. The {\bf Mediamill} dataset consists of 29800 videos, and Snoek et al.~\cite{snoek2006challenge} provide  a manually annotated lexicon containing 101 semantic concepts and a large number of pre-computed low-level multimedia features. Each visual feature vector, $X_i$, is a 120-dimensional vector that corresponds to a specified key frame in the dataset. The feature extraction is based on the method of \cite{van2006robust} and characterizes both global and local color-texture information of a single key frame, that is, an image. Each key frame is associated with a label vector $\ell \in \{0,1\}^C$, and each entry of $\ell$ corresponds to one of 101 semantic concepts. If $X_i$ contains content related to the $j^{\rm th}$ semantic concept, then the $j^{\rm th}$ entry of $\ell_i$ is set to 1, and if not, it is set to 0. 

The {\bf LAMDA} database contains 2000 images, and each image has been labeled with one or more of the following five class labels: desert, mountains, sea, sunset, and trees.  In total, 1543 images belong to exactly one class, 442 images belong to exactly two classes, and 15 images belong to three classes.  Of the 442 two-class images, 106 images have the labels `mountain' and `sky', 172 images have the labels `sunset' and `sea', and the remaining label-pairs each have less than 40 image members, with some as few as 5. Zhou and Zhang \cite{zhou2006} preprocessed the database and extracted from  each image a 135 element feature vector, which we use to compute image similarities.  

To evaluate the performance of our algorithm, we randomly generated 10000 query-pairs for {\bf Mediamill} and 1000 for {\bf LAMDA}, and ran our multiple-query retrieval algorithm on each pair.  We computed the $\text{nDCG}$ for different retrieval algorithms for each query-pair, and then computed the average nDCG over all query-pairs at different $K$. Since Efficient Manifold Ranking (EMR) uses a random initial condition for constructing the anchor graph, we furthermore run the entire experiment 20 times for {\bf Mediamill} and 100 times for {\bf LAMDA}, and computed the mean nDCG over all experiments.  This ensures that we avoid any bias from a particular EMR model.   

We show the mean $\text{nDCG}$ for our algorithm and many state of the art multiple-query retrieval algorithms for {\bf Mediamill} and {\bf LAMDA} in Figures \ref{fig:ndcg_large_withrules} and \ref{fig:ndcg_small_withrules}, respectively. We compare against MQ-Avg, MQ-Max, Joint-Avg, and Joint-SVM~\cite{arandjelovic2012multiple}. Joint-Avg combines histogram features of different queries to generate a new feature vector to use as an out-of-example query. A Joint-SVM classifier is used to rank each sample in response to each query. We note that Joint-SVM does not use EMR,  while MQ-Avg and MQ-Max both do.  Figures \ref{fig:ndcg_large_withrules} and \ref{fig:ndcg_small_withrules} show that our retrieval algorithm significantly outperforms all other algorithms.  

We should note that when randomly generating query-pairs for {\bf LAMDA}, we consider only the label-pairs (`mountain',`sky') and (`sunset',`sea'), since these are the only label-pairs for which there are a significant number of corresponding two-class images.  If there no multi-class images corresponding to a query-pair, then multiple-query retrieval is unnecessary; one can simply issue each query separately and take a union of the retrieved images. 

\begin{figure}[t]
\includegraphics[width=8.3cm]{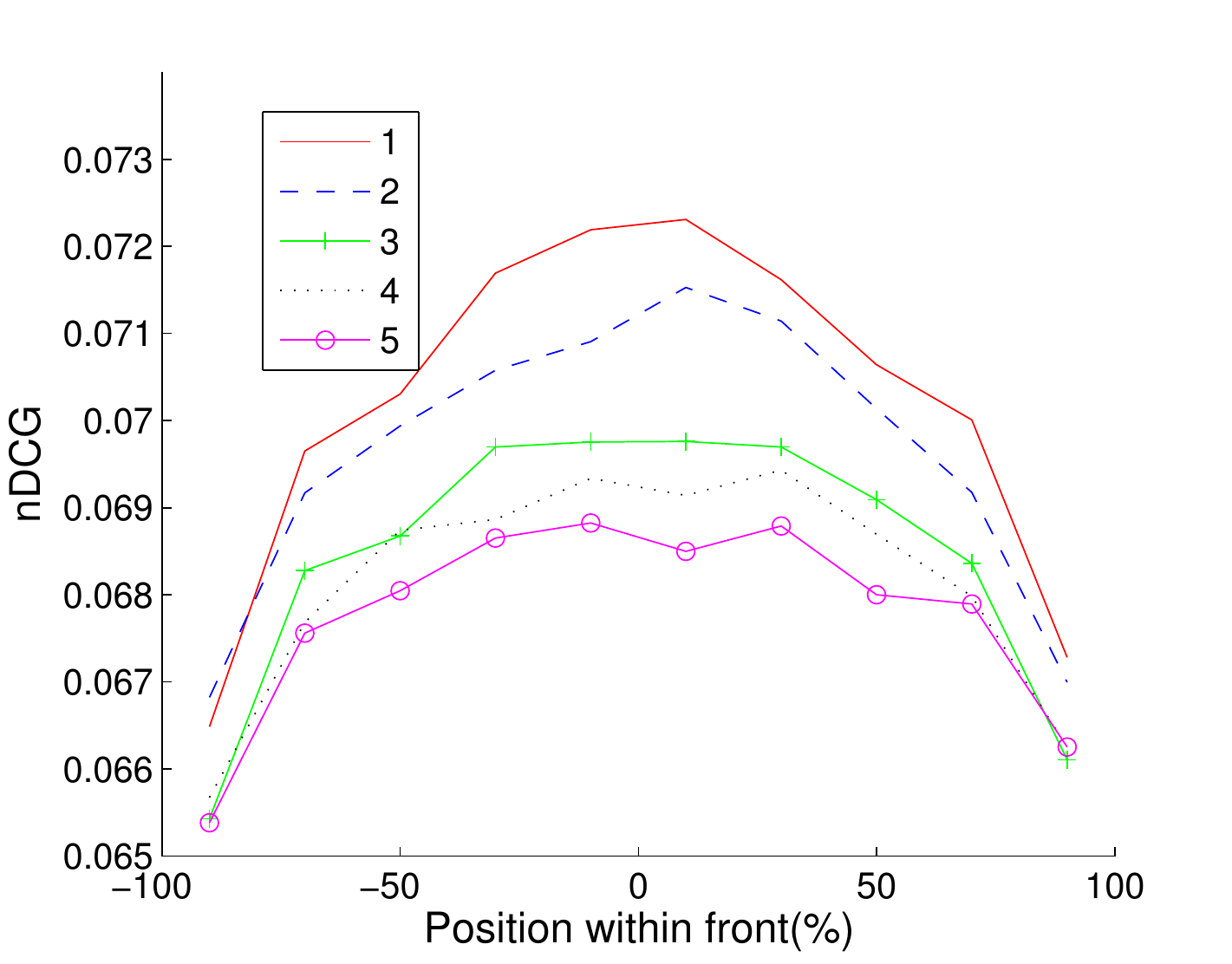}
\caption{Average unique relevance scores at different regions along top five Pareto fronts. This plot validates our assumption that the middle part of first Pareto fronts contain more important samples that are uniquely related to both queries. Samples at deeper fronts and near the tails are less interesting.}
\label{PrecisionPlotLinearFig}
\end{figure}

\begin{figure*}[ht]
\vskip 0.2in
\begin{center}
\centerline{\includegraphics[width=14 cm]{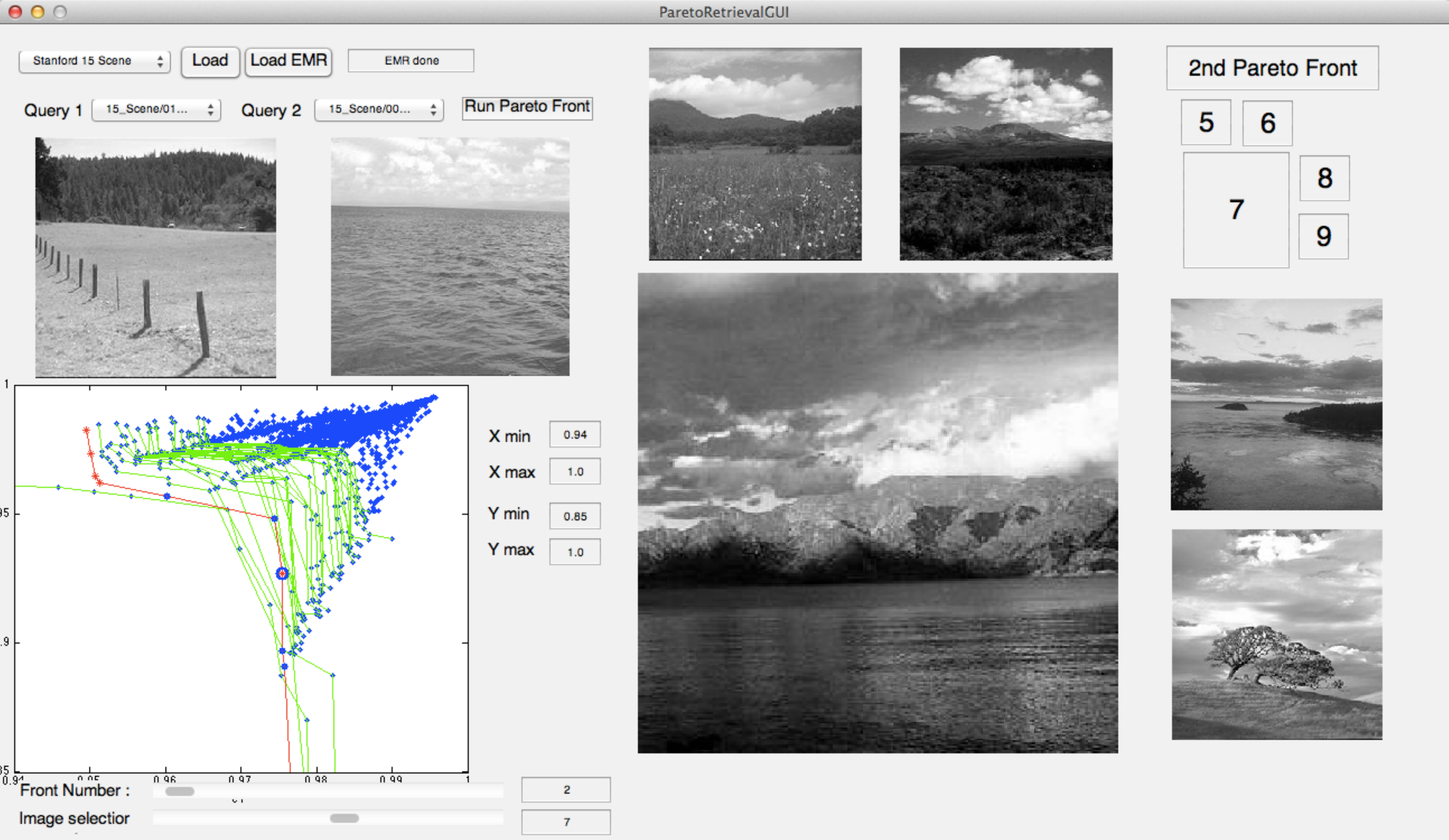}}
\caption{GUI screenshot. The two images on the upper left are two query images containing mountain and water, respectively. The largest image corresponds to the $7^{th}$ Pareto point on the second Pareto front and the other four images correspond to adjacent points on the same front . Users can select the front and the specific relevance point using the two slider bars at the bottom. }
\label{GUIscreenshot}
\end{center}
\vskip -0.2in
\end{figure*} 
  
To visualize advantages of the Pareto front method, we show in Figure  \ref{PrecisionPlotLinearFig} the multiple-query unique relevance scores for points within each of the first five Pareto fronts, plotted from one tail of the front, to the middle, to the other tail.  The relevance scores within each front are interpolated to a fixed grid, and averaged over all query pairs to give the curves in Figure \ref{PrecisionPlotLinearFig}.  We used the \textbf{Mediamill} dataset to generate Figure \ref{PrecisionPlotLinearFig}; the results on \textbf{LAMDA} are similar.
This result  validates  our assumption that the first front includes more important samples than deeper fronts, and that the middle of the Pareto fronts is fundamentally important for multiple-query retrieval. 

We also note that the middle portions of fronts 2--5 contain samples with higher scores than those located at the tail of the first front. This phenomenon suggests a modified version of PFM which starts returning points around the middle of second front after returning only, say, $d$ points from the first front.  The same would hold for the second front and so on. We have carried out some experiments with such an algorithm, and have found that it can lead to even larger performance improvements, as suggested by Figure \ref{PrecisionPlotLinearFig}, for certain choices of $d$. However, it may be difficult to determine the best choice of $d$ in advance since the label information is not available. Recall that label information is available only for testing and generating Figure \ref{PrecisionPlotLinearFig} for validation. Therefore, we decided for simplicity to leave this simple modification of the algorithm to future work.

\subsection{GUI for Pareto front retrieval}
A GUI for a two-query image retrieval was implemented to illustrate the Pareto front method for image retrieval. Users can easily select samples from different fronts and visually explore the neighboring samples along the front. Samples corresponding to Pareto points at one tail of the front are similar to only one query, while samples corresponding to Pareto points at the middle part of front are similar to both queries. When the Pareto point cloud is non-convex, users can use our GUI to easily identify  the Pareto points that cannot be obtained by any linear scalarization method. The screen shot of our GUI is shown in Figure \ref{GUIscreenshot}. In this example, the two query images correspond to a mountain and an ocean respectively. One of the retrieved images corresponds to a point in the middle part of the second front that includes both a mountain and an ocean. The Matlab code of GUI can be downloaded from \url{http://tbayes.eecs.umich.edu/coolmark/pareto}.

\section{Conclusions}
We have presented a novel algorithm for content-based multiple-query image retrieval where the queries all correspond to different image semantics, and the goal is to find images related to \emph{all} queries. This algorithm can retrieve samples which are not easily retrieved by other multiple-query retrieval algorithms and any linear scalarization method. We have presented theoretical results on asymptotic non-convexity of Pareto fronts that proves that the Pareto approach is better than using linear combinations of ranking results. Experimental studies on real-world datasets illustrate the advantages of the proposed Pareto front method.

\bibliographystyle{IEEEtran}
\bibliography{ParetoRetrievalRef}

\end{document}